\documentclass[11pt]{article} 
\usepackage[text={6.5in,9in}]{geometry}
\usepackage{amsmath,latexsym,amssymb,color,amsthm}
\usepackage{ifthen,graphics,epsfig}
\bibliographystyle{plain}
\usepackage{color}
\usepackage{xspace}
\usepackage{enumitem}
\usepackage{url}
\usepackage{amssymb}
\usepackage{pifont}
\usepackage{authblk}
\usepackage{booktabs}
\usepackage{wrapfig}
\usepackage[colorlinks=true,citecolor=gray]{hyperref}
\usepackage{caption}
\usepackage{subcaption}
\captionsetup[subfigure]{labelformat=simple}

\usepackage[table]{xcolor}

\usepackage{float}
\newfloat{algorithm}{thp}{lop}
\floatname{algorithm}{Algorithm}
\bibliographystyle{plain}
\newcommand{\Xomit}[1]{}

\usepackage{marginnote}
\usepackage[table]{xcolor}
\usepackage{todonotes}
\usepackage{menukeys}


\newcommand{\remove}[1]{}


\newlength {\squarewidth}



\newtheorem{theorem}{Theorem}
\newtheorem{lemma}{Lemma}

\newcommand{\toto}{xxx}

\newcounter{linecounter}
\newcommand{\linenumbering}{\ifthenelse{\value{linecounter}<10}{(0\arabic{linecounter})}{(\arabic{linecounter})}}
\renewcommand{\line}[1]{\refstepcounter{linecounter}\label{#1}\linenumbering}
\newcommand{\resetline}[1]{\setcounter{linecounter}{0}#1}
\renewcommand{\thelinecounter}{\ifnum \value{linecounter} > 9\else 0\fi \arabic{linecounter}}

\newcommand{\signed}[1]{\theta_{#1}}
\newcommand{\signedm}[2]{\big( #2, \theta_{#1} \big)}
\newcommand{\aux}[2]{\textsc{aux}[#1](#2)}
\newcommand{\coin}[1]{\textsc{coin}[#1]}
\newcommand{\coinval}{\textsc{c\_val}}
\newcommand{\coinmap}{coin\_map}
\newcommand{\auxproof}[1]{\langle#1\rangle}
\newcommand{\proofs}{\mathit{proofs}}
\newcommand{\isvalid}{{\sf is\_valid}}
\newcommand{\auxvalues}{\mathit{aux\_values}}

\newcommand{\send}{\mathit{\sf send}}
\newcommand{\sto}{\mathit{\sf to}}
\newcommand{\receive}{\mathit{\sf receive}}
\newcommand{\broadcast}{\mathit{\sf broadcast}}

\newcommand{\ttrue}{\mathit{\tt true}}
\newcommand{\tfalse}{\mathit{\tt false}}

\newcommand{\binpropose}{{\sf bin\_propose}}

\newcommand{\wait}{{\sf wait\_until}}
\renewcommand{\return}{{\sf return}}

\newcommand{\BAMP}{{\cal BAMP}_{n,t}}

\newcommand{\decide}{{\sf decide}}

\title{A Simple and Efficient Asynchronous Randomized Binary Byzantine Consensus Algorithm}

\author[1]{Tyler Crain}
\affil[1]{tcrainwork@gmail.com}

\begin{document}

\maketitle

\begin{abstract}
  This paper describes a simple and efficient asynchronous Binary Byzantine faulty tolerant
  consensus algorithm.
  In the algorithm, non-faulty nodes perform an initial broadcast followed by a executing a series of rounds
  each consisting of a single message broadcast plus the computation of a global random coin using
  threshold signatures.
  Each message is accompanied by a cryptographic proof of its validity.
  Up to one third of the nodes can be faulty and termination
  is expected in a constant number of rounds.
  An optimization is described allowing the round message plus the coin message to be combined,
  reducing rounds to a single message delay.
  Geo-distributed experiments are run on replicates in ten data center regions
  showing average latencies as low as $400$ milliseconds.
\end{abstract}













\section{Introduction and related work.}
Binary byzantine consensus concerns the problem of getting a set of distinct processes
distributed across a network to agree on a single binary value $0$ or $1$ where processes
can fail in arbitrary ways.
It is well known that this problem is impossible in an asynchronous network with at least
one faulty process~\cite{FLP85}. To get around this, algorithms can employ
randomization~\cite{A03, BO83, B87, BT83, CR93, FP90, KS16, MMR14, MMR15, PCR14, R83, T84},
or rely on an additional synchrony assumption~\cite{DDS87, DLS88}.
Randomized algorithms normally rely on the existence of a local or global random coin.
The output of local coin is only visible to an individual process, while the output of a global
coin is visible to all processes, but only once a threshold of processes have participated in computing the coin.

The algorithm presented in this paper uses a global coin, assumes at most one third of the processes
are faulty (a well know lower bound~\cite{LSP82}), and terminates in expected $O(1)$ number of message delays.
While there are many algorithms that solve this problem with the same guarantees~\cite{A03,CKS05,MR17},
this algorithm focuses on simplicity and efficiency.
Namely, it starts with each process broadcasting an initial proposal, then executing a series
of rounds that consist of two all to all message broadcasts.
The first being to distribute processes current binary estimates, and the second being used to compute
the output of the global coin.

The design of the algorithm is primarily based on two previous algorithms;~\cite{CKS05} and~\cite{MMR14}.
While these algorithms provide similar theoretical guarantees, they are slightly more complex/costly.
In this paper, like in~\cite{CKS05}, threshold signatures~\cite{CH89, D88, DF90, R98} are used to
implement the global coin, and a set of cryptographic signatures are included with each message proving
its validity.
The algorithm presented here differs in that it requires one less all to all message broadcast per round.
Similar to the randomized algorithm of~\cite{MMR14} this work relies on a global coin for correctness.
Differently,~\cite{MMR14} does not include cryptographic signatures with each message,
but requires up to $2$ additional message broadcasts per round and furthermore is not fully asynchronous
as it requires a fair scheduler to ensure termination in all cases~\cite{MMR15}.

While the binary consensus problem only allows process to agree on a single binary value,
there exist many reductions to multi-value consensus~\cite{MR17, MRT00, TC84, ZC09} allowing processes to agree on arbitrary values.
Furthermore many algorithms~\cite{BSA14, CL02} exists that solve multi-value consensus directly
through the use of types of synchrony assumptions to ensure termination.
Additionally, algorithms exists that make many different assumptions about the model
such as synchrony~\cite{FM97}, different fault models~\cite{LVCQV16, MA06, PSL80},
solve different definitions of consensus~\cite{NCV05}, and so on.

It should should be noted that the design of this algorithm follows closely the algorithm of~\cite{C20},
but~\cite{C20} relies on partial synchrony~\cite{DLS88} for termination through the use of
a weak round coordinator and timeout.
In most cases that algorithm terminates much faster and does not require threshold signatures.
It is therefore suggested to use that algorithm over this one, unless a truly
adversarial asynchronous network is expected.

\section{A Byzantine Computation Model.}
\label{sec:model}

This section describes the assumed computation model.
For simplicity we assume idealized cryptographic assumptions.

\paragraph{Asynchronous processes.}
The system is made up of a set $\Pi$ of $n$ asynchronous sequential processes,
namely $\Pi = \{p_1,\ldots,p_n\}$; $i$ is called the ``index'' of $p_i$. 
``Asynchronous'' means that each process proceeds at its own speed,
which can vary with time and remains unknown to the other processes.
``Sequential'' means that a process executes one step at a time.
This does not prevent it from executing several threads with an appropriate
multiplexing. 
%
Both notations
$i\in Y$ and $p_i\in Y$ are used to say that $p_i$ belongs to the set $Y$.

\paragraph{Communication network.}
\label{sec:basic-comm-operations}
The processes communicate by exchanging messages through
an asynchronous reliable point-to-point network. ``Asynchronous''  means that
there is no bound on message transfer delays, but these delays are finite.
``Reliable'' means that the network does not lose, duplicate, modify, or
create messages. ``Point-to-point'' means that any pair of processes
is connected by a bidirectional channel.
%
A process $p_i$ sends a message to a process $p_j$ by invoking the primitive 
``$\send$ {\sc tag}$(m)$ $\sto~p_j$'', where {\sc tag} is the type
of the message and $m$ its content. To simplify the presentation, it is
assumed that a process can send messages to itself. A process $p_i$ receives 
a message by executing the primitive ``$\receive()$''.
The macro-operation $\broadcast$ {\sc tag}$(m)$ is  used as a shortcut for
``{\bf for each} $p_i \in \Pi$  {\bf do} $\send$ {\sc tag}$(m)$ $\sto~p_j$
{\bf end for}''. 

\paragraph{Signatures.}
Asymmetric cryptography allow processes to sign messages.
Each process $p_i$ has a public key known by everyone and a private key
known only by $p_i$.
All messages are signed using the private key and can be validated by any process
with the corresponding public key, allowing the process to identify the signer of the message.
Signatures are assumed to be unforgeable.
A process will ignore any message that is malformed or contains an invalid signature.

\paragraph{$(n-t)$ non-interactive threshold signatures.}
Given the set of $n$ processes,
taking $n-t$ signatures of the same message from $n-t$ different processes can be combined to
generate a \emph{unique} threshold signature that can be verified by a threshold public key
known by everyone.
Any set of $n-t$ signatures of the same message from $n-t$ different processes
generates the same threshold signature.
Threshold signatures are assumed to be unforgeable and no set of less than $n-t$ nodes
can generate them.

\paragraph{Random oracle.}
A random oracle~\cite{FS86} is assumed giving us access to an ideal cryptographic hash function.
This function takes a set of bits as input and deterministically outputs a random
fixed length set of bits.

\paragraph{Failure model.}
Up to $t$ processes can exhibit a {\it Byzantine} behavior~\cite{PSL80}.
 A Byzantine process is a process that behaves
arbitrarily: it can crash, fail to send or receive messages, send
arbitrary messages, start in an arbitrary state, perform arbitrary state
transitions, etc. Moreover, Byzantine processes can collude 
to ``pollute'' the computation (e.g., by sending  messages with the same 
content, while they should send messages with distinct content if 
they were non-faulty). 
A process that exhibits a Byzantine behavior is called {\it faulty}.
Otherwise, it is {\it non-faulty}.  
Let us notice that, 
as unforgeable signatures are used
no Byzantine process can impersonate another process.   
Byzantine processes can control the network by modifying
the order in which messages are received, but they cannot
postpone forever message receptions.  


\section{Binary Byzantine Consensus.}
\label{sec:byz-consensus}

\subsection{The Binary Consensus Problem.}


%

In the binary consensus problem processes input a value to the algorithm, called their \emph{proposal},
run an algorithm consisting of several rounds,
and eventually output a binary value called their \emph{decision}.
Let $\cal V$ be the set of values that can be proposed.
While  $\cal V$ can contain any number ($\geq 2$) of values
in multi-valued consensus, it contains only two values in binary consensus, 
e.g., ${\cal V} =\{0,1\}$.
Assuming that
each non-faulty process proposes a value, the binary Byzantine consensus (BBC) problem is for 
each of them to
decide on a value in such a way that the following properties are
satisfied:
\begin{itemize}
\item BBC-Termination. Every non-faulty process eventually decides on a value.
\item BBC-Agreement.   No two non-faulty processes decide on different values.
\item BBC-Validity.  If all non-faulty processes propose the same value, no
other value can be decided.
\end{itemize}

\paragraph{Notations.}

\begin{itemize}
\item The acronym ${\BAMP}[\emptyset]$ is used to denote the 
  basic Byzantine Asynchronous Message-Passing computation model;
  $\emptyset$ means that there is no additional assumption. 
\item The basic computation model strengthened with the additional constraint $t<n/3$
  is denoted ${\BAMP}[t<n/3]$.
\item A signature of process $i$ is $\signed{i}$.
\item A message $m$ signed by process $i$ is $\signedm{i}{m}$.
\end{itemize}


\subsection{A Safe and Live Consensus Algorithm in ${\BAMP}[t<n/3]$.}
\label{ssec:live-bbc}

\paragraph{Message types.}
The following message types are used by the consensus.
\begin{itemize}
\item $\aux{r}{v}$. An {\sc aux} message contains a round number $r$ and a binary value $v$.
\item $\auxproof{\signedm{i}{\aux{r}{v}},\proofs}$. A tuple containing an {\sc aux} message signed
  by process $i$ and a set $\proofs$ containing signed
  {\sc aux} messages from a previous round that are used to prove $v$ is a \emph{valid} binary proposal for round $r$.
\item $\signedm{i}{\coin{r}}$.  A message for round $r$ signed by process $i$ that will be used to generate
  random global coin outputs.
\end{itemize}

\paragraph{Valid Notation.}
For a given round $r \geq 1$ a binary value $b$ is \emph{valid} if
$b$ has been proposed by a non-faulty process and
$\neg b$ has not been decided in any round before $r$.
An $\auxproof{\signedm{i}{\aux{r}{v}},\proofs}$ is valid if binary value $v$ is valid in round $r$.
The algorithm describes a function that is used to compute the validity of a message
given $r$, $v$, and $\proofs$ as input.

\paragraph{An $(n-t)$-threshold random global coin.}
The existence of a random global coin is assumed for both correctness and termination of the algorithm.
The coin is ``flipped'' when processes participate in computing the output of the coin for a given round of the algorithm.
The following properties are ensured by the coin.
\begin{itemize}
\item {\bf c-binary}. The output of a coin flip is a binary value.
\item {\bf c-threshold}. The output of a coin flip is not revealed until at least $n-t$ processes participate in the coin flip.
\item {\bf c-global}. All processes observe the same output of a coin flip.
\item {\bf c-random}. The output of the coin flip is random meaning that before $n-t$ processes have participated in flipping the coin
  then no process can correctly guess the output of the coin with probability greater than $1/2$.
\item {\bf c-flip}. The coin can be flipped any number of times.
\end{itemize}

In the algorithm a coin is flipped every round as follows:
When a process signs and broadcasts a $\signedm{i}{\coin{r}}$ message, the process
is considered to have participated in flipping the coin for round $r$.
The output of the coin is generated by taking the first bit of the cryptographic hash computed using the $(n-t)$ threshold signature
of the $\signedm{i}{\coin{r}}$ message as input.
Given that a threshold signatures are unique, cannot be computed with less than $(n-t)$ signatures, and that
the output of the cryptographic hash is random, the properties of the coin are ensured.
The algorithm~\cite{CKS05} generates random coin values in a similar manner.

\paragraph{Variables.}
The following variables are used throughout all rounds of the consensus.
\begin{itemize}
\item $r_i$. Current round number of process $i$.
\item $est_i$. Current estimate at process $i$. It can either be a binary value ($0$ or $1$)
  or it can be the special value $\coinval$ meaning the estimate will chosen as the result of the coin flip of round $r_i$.
\item $\coinmap_i$. Map from round to the binary value corresponding to the result of the coin flip for that round at process $i$.
\item $\auxvalues_i$. Set of valid signed {\sc aux} messages received by process $i$
  throughout all rounds of the consensus.
\end{itemize}

\begin{figure*}[ht!]
\centering{
\fbox{
\begin{minipage}[t]{150mm}
\footnotesize
\renewcommand{\baselinestretch}{2.5}
\resetline
\begin{tabbing}
aaaA\=aaA\=aaaA\=aaaaaaaaaA\kill

{\bf opera}\={\bf tion} ${\binpropose}(v_i)$ {\bf is}\\

\line{BBC-01} \> $r_i \leftarrow 0$; $\auxvalues_i \leftarrow \emptyset$; $\coinmap_i \leftarrow \emptyset$;\\


\line{BBC-02} \> ${\broadcast}$ $\auxproof{\signedm{i}{\aux{r_i}{v_i}},\emptyset}$;
{\it \scriptsize  \hfill \color{gray}{// Broadcast the initial proposal}}\\

\line{BBC-03} \> $\wait$ $(n-t)$ valid $\aux{r_i}{}$ messages have been received from $(n-t)$ different processes; \\

\line{BBC-04} \> {\bf if} \= (at least $(t+1)$ valid $\aux{r_i}{0}$ messages have been received from $(t+1)$ different processes) \\

\line{BBC-05} \>\> {\bf then} $est_i \leftarrow 0$ \\

\line{BBC-06} \>\> {\bf else} $est_i \leftarrow 1$ \\

\line{BBC-07} \> {\bf end if} \\

\line{BBC-08} \> {\bf while} $(\ttrue)$  {\bf do} \\

\line{BBC-09} \>\> $r_i \leftarrow r_i+1$;\\

\line{BBC-10} \>\> $proofs_i \leftarrow $ compute $proofs_i$ as a set of signed {\sc aux} messages from $\auxvalues_i$ that satisfy the \\

\>\>   $~~~~~~~~~$  $\isvalid$ predicate for binary value $est_i$ and round $r_i$; \\

\line{BBC-11} \>\> ${\broadcast}$ $\auxproof{\signedm{i}{\aux{r_i}{est_i}},proofs}$; \\

\line{BBC-12} \>\> $\wait$ $(n-t)$ valid $\aux{r_i}{}$ messages have been received from $(n-t)$ different processes; \\

\line{BBC-13} \>\> {\bf if} \= ($\exists$ $b\_val \in \{0,1\}$ such that $(n-t)$ valid $\aux{r_i}{b\_val}$ messages have been  \\
\>\>   $~~~~~~~~~~~$ received from $(n-t)$ different processes) \\

\line{BBC-14} \>\>\> {\bf then} $est_i \leftarrow b\_val$  \\

\line{BBC-15} \>\>\> {\bf else} $est_i \leftarrow \coinval$
{\it \scriptsize  \hfill \color{gray}{// $est_i$ will take the value of the coin when it is revealed}} \\

\line{BBC-16} \>\> {\bf end if} \\





\line{BBC-17} \>\> ${\broadcast}$ $\signedm{i}{\coin{r_i}}$; \\

\line{BBC-18} \>\> $\wait$ $(n-t)$ valid $\coin{r_i}$ messages have been received from $(n-t)$ different processes; \\

\line{BBC-19} \>\> $\coinmap_i[r_i] \leftarrow$ compute the first bit of the cryptographic hash of the $(n-t)$ threshold \\
\>\>   $~~~~~~~~~~~~~~~~$ signature of $\coin{r_i}$; \\

\line{BBC-20} \>\> {\bf if} \= ($n-t$) valid $\aux{r_i}{\coinmap_i[r_i]}$ messages have been received from $(n-t)$ different processes \\

\line{BBC-21} \>\>\> {\bf then} $\decide(\coinmap_i[r_i])$ if not yet done {\bf  end if} \\

\line{BBC-22} \>\> {\bf if} \= $est_i = \coinval$ {\bf then} $est_i \leftarrow \coinmap_i[r_i]$ {\bf end if} \\






\line{BBC-23} \> {\bf end while}; \\~\\

{\bf when} $\auxproof{\signedm{j}{\aux{r_j}{est_j}},proofs}$  {\bf is received}\\

\line{BBC-24} \>   
   {\bf if} \=
   $\big(\isvalid(r_j, est_j, proofs)\big)$   {\bf then} \\

   \line{BBC-25} \> \> $\proofs \leftarrow proofs \setminus $ $\{$any messages in $\proofs$ not needed to satisfy the $\isvalid$ predicate$\}$. \\
   
\line{BBC-26} \> \>  
  $\auxvalues_i \leftarrow \auxvalues_i \cup \{\signedm{j}{\aux{r_j}{est_j}}\} \cup proofs$; \\


  \line{BBC-27} \>  {\bf end if}.

\end{tabbing}
\normalsize
\end{minipage}
}
\caption{A safe algorithm for the binary Byzantine consensus in ${\BAMP}[t<n/3]$.}
\label{algo-BBC} 
\vspace{-1em}
}
\end{figure*}

\begin{figure*}[ht!]
\centering{
\fbox{
\begin{minipage}[t]{150mm}
\footnotesize
\renewcommand{\baselinestretch}{2.5}
\resetline
\begin{tabbing}
aaaA\=aaA\=aaaA\=aaaaaaaaaA\kill

{\bf pred}\={\bf icate} ${\isvalid}(r, est, proofs)$ {\bf is}\\

\line{IV-01} \> {\bf if} $(r=0)$   {\bf then} $\return(\ttrue)$ {\bf  end if};\\

\line{IV-02} \> {\bf if} $(r=1)$ {\bf then} \\

\line{IV-03} \>\> {\bf if} ($\exists$ signed messages $\aux{0}{est}$ from $t+1$ different processes in $\proofs$) \\

\>\>\>  {\bf then} $\return(\ttrue)$ \\

\>\>\> {\bf else} $\return(\tfalse)$ \\

\>\> {\bf end if} \\

\line{IV-04} \> {\bf end if}; \\

\line{IV-05} \> $prev\_r \leftarrow$ compute $prev\_r$ as the largest round smaller than $r$ where $\coinmap_i[prev\_r] = \neg est$ \\
\>\>   $~~~~~~~~$  or $0$ if no such round exists; \\


\line{IV-06} \> {\bf if} $(prev\_r = 0$ $\wedge$ $\exists$ signed messages $\aux{0}{est}$ from $t+1$ different processes in $proofs)$ \\

\>\>  {\bf then} $\return(\ttrue)$ \\





\line{IV-07} \> {\bf else if} $(\exists$ signed messages $\aux{prev\_r}{est}$ from $n-t$ different processes in $proofs)$ \\

\>\> {\bf then} $\return(\ttrue)$ \\

\line{IV-08} \> {\bf end if} \\

\line{IV-09} \> $\return(\tfalse)$.

\end{tabbing}
\normalsize
\end{minipage}
}
\caption{Algorithm for the $\isvalid$ predicate.}
\label{algo-is-safe} 
\vspace{-1em}
}
\end{figure*}

\paragraph{Algorithm description.}
Figures~\ref{algo-BBC} and~\ref{algo-is-safe} describe the pseudo-code
for the algorithm.
The $\binpropose$ operation of Figure~\ref{algo-BBC} contains the main loop of the algorithm.
The lines~\ref{BBC-24}-\ref{BBC-27} handle the reception of signed {\sc aux} messages.
The $\isvalid$ predicate of Figure~\ref{algo-is-safe} describes the procedure used
to check if a binary value is valid for a given round and a set $\proofs$ of signed {\sc aux} messages.

To start the consensus, each process $p_i$ calls $\binpropose$ with its initial binary proposal $v_i$ (Figure~\ref{algo-BBC}).
Line~\ref{BBC-01} initializes local variables, then on Line~\ref{BBC-02} the process
broadcasts a signed {\sc aux} message with round $0$, binary value $v_i$, and an
empty set for $\proofs$ as any round $0$ message is considered to be valid.
The process then waits until $n-t$ round $0$ {\sc aux} messages are received (line~\ref{BBC-03}).
An initial estimate is then chosen by taking a binary value that has at least $t+1$ broadcasters
(lines~\ref{BBC-04}-\ref{BBC-07}). This ensures that the estimate was broadcast by a non-faulty process.
The process then repeats the while loop of Lines~\ref{BBC-08}-\ref{BBC-23} for each round.

A round starts by incrementing the round counter on line~\ref{BBC-04}.
The process then uses the $\isvalid$ predicate to compute a set of signed {\sc aux} messages
that prove $est_i$ is valid in the current round (line~\ref{BBC-10}).
Next the process signs an {\sc aux} message for round $r$ with binary value $est_i$ and broadcasts
it along with the proofs generated on the previous line.
The process then waits until $n-t$ valid {\sc aux} messages are received from different processes
for the current round (line~\ref{BBC-12}).
The estimate of the valid binary values is then update on lines~\ref{BBC-13}-\ref{BBC-16} as follows:
First, if all of the received {\sc aux} messages contain the same binary value then the processes sets its estimate to this value,
otherwise it sets its estimate to $\coinval$, meaning that once the output of the coin flip for the current
round is revealed, the process will set its estimate to this value.
Given that $t < n/3$, receiving $n-t$ signed {\sc aux} messages of the same value ensures that any set of
$n-t$ signed {\sc aux} messages for that round will contain at least one {\sc aux} message supporting the same value,
thus all non-faulty processes will set their estimate to either this value or the value of the coin.


The process then participates in computing the value of the coin for this round by
broadcasting a {\sc coin} message (line~\ref{BBC-17}).
Next it waits until $n-t$ signed coin messages have been received, computes
the $n-t$ threshold signature of the message, inputs this value to the cryptographic hash function,
and takes first bit output as the value of the coin (lines~\ref{BBC-18}-\ref{BBC-19}).
Following this, if $n-t$ signed {\sc aux} messages have been received from different processes
with the same value as the coin, this value is decide (lines~\ref{BBC-20}-\ref{BBC-21}).
Finally, if the estimate of the process has been set to $\coinval$, it is updated to the value of the coin
(line~\ref{BBC-22}) and process continues to the next round.

Lines~\ref{BBC-24}-\ref{BBC-27} describe what happens when a signed {\sc aux} message and its proofs are received.
If the $\isvalid$ predicate indicates that this message is valid, then the signed {\sc aux} message and its
proofs are added to the $\auxvalues_i$ set (lines~\ref{BBC-22}-\ref{BBC-24}).
Line~\ref{BBC-24} ensures that no invalid messages are added to $\auxvalues_i$.

\paragraph{Is\_valid predicate description.}
Figure~\ref{algo-is-safe} describes the $\isvalid$ predicate that is called by Algorithm~\ref{algo-BBC}
to check if a binary value is valid.  It takes as input a round $r$, a binary value $est$
and a set of signed {\sc aux} messages in $\proofs$.
As previously mentioned, the predicate
should return $\ttrue$ if $\proofs$ ensures that
(i) $est$ was proposed by a non-faulty process
and (ii) $\neg est$ has not been
decided by any non-faulty process in any round before $r$.
Otherwise $\tfalse$ should be returned.

For round $0$ the predicate immediately returns $\ttrue$ as any initial proposal is valid (line~\ref{IV-01}).
For round $1$, as no value can be decided in round $0$, the predicate returns $\ttrue$
if $\proofs$ contains at least $t+1$ round $0$ messages with binary value $est$ (line~\ref{IV-03}), i.e.
if (i) is satisfied.

For any other round $r > 1$, the process computes the largest round $prev\_r$ such that $prev\_r < r$ and the value of the coin
in $prev\_r$ was $\neg est$, otherwise $prev\_r$ is set to $0$ if no such round exists (line~\ref{BBC-05}).
Given line~\ref{BBC-21} of Figure~\ref{algo-BBC} we know that a value can only be decided if $n-t$ messages
are received matching the value of the coin.
Thus if $prev\_r$ is $0$, we know $\neg est$ could not have been decided before round $r$ and the predicate is satisfied
as long as $proofs$ contains $t+1$ signed $\aux{0}{est}$ messages.
Otherwise if $prev\_r > 0$ the predicate is satisfied if $proofs$ contains $n-t$ signed $\aux{prev\_r}{est}$ messages.
In this case, as $t < n/3$ no process could have received $n-t$ signed $\aux{prev\_r}{\neg est}$ messages and therefore
no non-faulty process could have decided
$\neg est$ in rounds from $prev\_r$ until $r$.
An argument by induction can be then made that $\neg est$ was not decided in any previous round.

If none of these cases are met then $\tfalse$ is returned.

\subsection{Proofs.}

This section shows that the algorithm presented in Figure~\ref{algo-BBC} solves the Binary consensus problem in ${\BAMP}[t<n/3]$
through a series of lemmas.

\begin{lemma}
  \label{lem:maj}
  For a given round $r$ there can be at most one binary value $b$ for which there exists at least $n-t$
  signed $\aux{r}{b}$ messages from different processes.
\end{lemma}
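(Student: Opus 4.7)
The plan is to use a standard quorum-intersection argument combined with the fact that signatures are unforgeable and that a non-faulty process signs at most one \textsc{aux} message per round.

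First, I would assume for contradiction that there exist two distinct binary values, say $0$ and $1$, each having $n-t$ signed $\aux{r}{\cdot}$ messages from $n-t$ distinct signers. Call these signer sets $S_0$ and $S_1$. Applying inclusion–exclusion inside $\Pi$, we get
\[
|S_0 \cap S_1| \;\geq\; |S_0| + |S_1| - n \;=\; 2(n-t) - n \;=\; n - 2t.
\]
Since the model assumes $t < n/3$, equivalently $n > 3t$, this yields $|S_0 \cap S_1| \geq n - 2t > t$. Because at most $t$ processes are Byzantine, the intersection must contain at least one non-faulty process $p_j$.

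Next I would derive the contradiction from $p_j$'s behavior in the algorithm. By unforgeability of signatures (from the model section), the only way $\signedm{j}{\aux{r}{0}}$ and $\signedm{j}{\aux{r}{1}}$ can exist is if $p_j$ itself produced both signatures. However, inspecting Figure~\ref{algo-BBC}, a non-faulty process signs and broadcasts exactly one \textsc{aux} message per round: on line~\ref{BBC-02} for round $0$, and on line~\ref{BBC-11} for every round $r \geq 1$, each time using a single deterministic value ($v_i$ or the current $est_i$, which at that point is already a fixed binary value). Hence $p_j$ cannot have signed \textsc{aux} messages for both $0$ and $1$ in the same round $r$, contradicting $p_j \in S_0 \cap S_1$.

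The only subtle step is the quorum-intersection counting, which only works because the threshold $n - t$ is strictly above $2n/3$; this is precisely where the assumption $t < n/3$ is used. The rest is routine: unforgeability prevents Byzantine processes from forging a non-faulty signature, and a direct inspection of the pseudocode confirms the single-signature-per-round invariant for non-faulty processes.
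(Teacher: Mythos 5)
Your proof is correct and is essentially the argument the paper intends: the paper's one-line proof cites exactly the two facts you use (at most $t < n/3$ faulty processes, and one signed \textsc{aux} message per round per non-faulty process), and your quorum-intersection computation $|S_0 \cap S_1| \geq 2(n-t) - n = n - 2t > t$ is the standard way to make that one-liner rigorous. No gaps; you have simply written out the details the paper leaves implicit.
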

\begin{proof}
  This follows from the fact that there are at most $t < n/3$ faulty processes and that non-faulty processes
  sign and broadcast at most one {\sc aux} message per round.
\end{proof}

The following lemma shows that processes will receive enough messages in every round to progress to the following round.

\begin{lemma}
  \label{lem:enough}
  At any non-faulty process $p_i$ with estimate $est_i$ and round $r_i > 0$, $p_i$ will (eventually) receive enough valid messages to
  satisfy the $\isvalid$ predicate of Figure~\ref{algo-is-safe} for the $r_i$ and $est_i$.
\end{lemma}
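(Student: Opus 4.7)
The plan is a case analysis on how the current value of $est_i$ came to be, showing in each case that the proofs required by $\isvalid(r_i, est_i, \cdot)$ have already been stored in $\auxvalues_i$ via the reception rule at lines~\ref{BBC-24}--\ref{BBC-27}. No deep induction is needed: I only reach back one round, using the fact that whatever $est_i$ is right now was frozen during round $r_i - 1$ (or during the initialization block if $r_i = 1$).

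For the base case $r_i = 1$, the estimate was chosen at lines~\ref{BBC-04}--\ref{BBC-07} from the $n - t$ valid round-$0$ aux messages collected at line~\ref{BBC-03}. A short counting argument using $n - t \ge 2t+1$ shows that $\auxvalues_i$ contains at least $t+1$ signed $\aux{0}{est_i}$ messages: either the branch of line~\ref{BBC-04} fires directly, or at most $t$ of the $n - t$ collected messages carry $0$, leaving at least $t+1$ for $1$. That is exactly what line~\ref{IV-03} of $\isvalid$ demands.

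For $r_i > 1$, $est_i$ was set during round $r_i - 1$ either by line~\ref{BBC-14} (the unanimous branch) or by lines~\ref{BBC-15} plus~\ref{BBC-22} (the coin branch). In the unanimous branch $p_i$ already has $n - t$ valid $\aux{r_i-1}{est_i}$ messages in $\auxvalues_i$; if $\coinmap_i[r_i - 1] = \neg est_i$ these directly discharge line~\ref{IV-07}, and otherwise $\coinmap_i[r_i - 1] = est_i$, so the $prev\_r$ computed at line~\ref{IV-05} at round $r_i$ coincides with the $prev\_r$ that was used when the incoming $\aux{r_i-1}{est_i}$ messages were themselves validated, and the proofs attached to those messages (added to $\auxvalues_i$ at line~\ref{BBC-26}) carry over unchanged. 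In the coin branch $est_i = \coinmap_i[r_i - 1]$; since line~\ref{BBC-14} did not fire, the $n - t$ collected round-$(r_i - 1)$ messages are split between both binary values, so $p_i$ received at least one valid $\aux{r_i-1}{\coinmap_i[r_i - 1]}$ message whose attached proofs are now in $\auxvalues_i$ and satisfy $\isvalid(r_i - 1, est_i, \cdot)$; because the coin in round $r_i - 1$ equals $est_i$ (and not $\neg est_i$) it again does not shift $prev\_r$, so those same proofs satisfy $\isvalid(r_i, est_i, \cdot)$. The main obstacle, which is delicate rather than deep, is exactly this bookkeeping: one must check in each subcase that the round-$(r_i - 1)$ coin outcome does not move the ``previous $\neg est_i$-coin round'' identified at line~\ref{IV-05}, and the case split on $\coinmap_i[r_i - 1]$ is what makes the argument go through.
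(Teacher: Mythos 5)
Your proof is correct and follows essentially the same route as the paper's: a counting argument on the round-$0$ messages for $r_i=1$, and for $r_i>1$ a case analysis showing that the valid round-$(r_i-1)$ messages $p_i$ collected at line~\ref{BBC-12} (together with their attached proofs, stored via lines~\ref{BBC-24}--\ref{BBC-26}) already discharge $\isvalid(r_i,est_i,\cdot)$, because the round-$(r_i-1)$ coin outcome never moves the $prev\_r$ of line~\ref{IV-05} away from the value under which those proofs were originally validated. The only substantive difference is that the paper wraps this in an explicit induction over rounds (which additionally establishes that every non-faulty process keeps advancing) and treats $r_i=2$ as a separate case, for a reason your write-up glosses over: round-$1$ messages are validated through line~\ref{IV-03}, not line~\ref{IV-05}, so for $r_i=2$ your phrase ``the $prev\_r$ that was used when the incoming messages were themselves validated'' does not literally apply; the patch is immediate, since when $\coinmap_i[1]=est_i$ one gets $prev\_r=0$ at round $2$ and the $t+1$ signed $\aux{0}{est_i}$ messages contained in the inherited proofs satisfy line~\ref{IV-06}.
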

\begin{proof}
  
  By line~\ref{IV-01} of the $\isvalid$ predicate all signed round $0$ {\sc aux} messages are valid and
  by line~\ref{BBC-02} of Figure~\ref{algo-BBC} all non-faulty processes sign and broadcast a round $0$ {\sc aux} message.
  All non-faulty processes will then receive at least $n-t$ signed round $0$ {\sc aux} messages from different processes.
  Given that $t < n/3$, of these $n-t$ messages, at least $t+1$ messages supporting a single binary
  value will be received and the process will set its estimate to this value on lines~\ref{BBC-04}-\ref{BBC-06},
  satisfying line~\ref{IV-03} of the $\isvalid$ predicate for round $1$.
  All non-faulty processes will then sign and broadcast a valid {\sc aux} message for round $1$,
  participate in computing the coin, and advance to round $2$.

  Let the output of the coin for round $1$ be some binary value $b\_val_1$.
  In round $2$ non-faulty processes will receive at least $n-t$ signed valid round $1$ {\sc aux} messages from different processes.
  If $n-t$ of these messages are of the form $\aux{1}{\neg b\_val_1}$, then by line~\ref{IV-07} of Figure~\ref{algo-is-safe} the $\isvalid$ predicate
  is satisfied for round $2$. Additionally the process will set its estimate to $\neg b\_val_1$ on line~\ref{BBC-14} of Figure~\ref{algo-BBC}.
  Otherwise, at least one of the valid signed {\sc aux} messages received must be of the form $\aux{1}{b\_val}$
  and the estimate is set to $b\_val_1$ (the value of the coin) on line~\ref{BBC-15}.
  By line~\ref{BBC-24} of Figure~\ref{algo-BBC}
  this message must contain proofs generated by the $\isvalid$ predicate supporting binary value $b\_val_1$ for round $1$.
  This can only happen on line~\ref{IV-03} of Figure~\ref{algo-is-safe} by including $t+1$ messages of the form $\aux{0}{b\_val}$.
  Notice then, that given the value for the coin for round $1$ is $b\_val_1$, $prev_r$ will be computed as $0$
  on line~\ref{IV-05}, and by line~\ref{IV-07} these proofs also satisfy the $\isvalid$ predicate for round $2$.
  Thus, all non-faulty processes will then sign and broadcast a valid {\sc aux} message for round $2$ and advance to round $3$. 

  Now assume by induction all non-faulty processes have received enough valid messages to satisfy the $\isvalid$ predicate
  for a round $r-1$. All processes will then sign and broadcast a valid {\sc aux} message on line~\ref{BBC-11} of
  Figure~\ref{algo-BBC}, participate in computing the coin, and advance to round $r$.
  As a result all non-faulty processes will receive at least $n-t$ valid signed {\sc aux} messages from round $r-1$.

  Let the output of the coin for round $r$ be some binary value $b\_val_r$.
  
  First consider a non-faulty process whose estimate $est_i$ in round $r$ was set to the value $\neg b\_val_r$ in round $r-1$.
  In this case, by line~\ref{BBC-12} of Figure~\ref{algo-BBC} the process must have received at least one
  valid message $\aux{r-1}{b\_val_r}$ and its proofs $proofs$.
  Now given that $est_i = b\_val_r$ (i.e. the same value as the coin),
  the value $prev_r$ computed on line~\ref{IV-05} of the $\isvalid$ predicate
  is the same when $\isvalid$ is called with input round $r-1$ or $r$.
  Thus calling predicate $\isvalid(r-1, est_i, proofs)$
  is equivalent to calling $\isvalid(r, est_i, proofs)$, and
  given that $\auxvalues_i$ contains $proofs$ (line~\ref{BBC-26}), $est_i$ will satisfy the $\isvalid$ predicate for round $r$.

  Now consider a non-faulty process whose estimate $est_i$ in round $r$ was set to $b\_val_r$ in round $r-1$.
  In this case, by line~\ref{BBC-12} of Figure~\ref{algo-BBC} the process must have received at least $n-t$
  valid $\aux{r-1}{\neg b\_val_r}$ messages.
  Then by lines~\ref{IV-05} and~\ref{IV-07} of Figure~\ref{algo-is-safe} these same messages satisfy the $\isvalid$
  predicate for round $r$ and $b\_val_r$.  
\end{proof}

Note that Lemma~\ref{lem:enough} only says that estimates broadcast by non-faulty processes satisfy the $\isvalid$ predicate,
but does not ensure that these values actually satisfy the validity definitions, the following lemmas will show this.

\begin{lemma}
  \label{lem:propose}
  In round $r > 0$ non-faulty processes will only sign and broadcast {\sc aux} messages containing binary values proposed by non-faulty processes.
\end{lemma}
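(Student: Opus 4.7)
The plan is to prove the lemma by strong induction on the round number $r$, mirroring the inductive structure already used in Lemma~\ref{lem:enough}. The base case is $r=1$ and the inductive step treats $r>1$ by inspecting which branch of the $\isvalid$ predicate in Figure~\ref{algo-is-safe} is being used to justify the broadcast.

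For the base case $r=1$, I would argue directly from lines~\ref{BBC-04}--\ref{BBC-06} of Figure~\ref{algo-BBC}: a non-faulty process $p_i$ only broadcasts $\aux{1}{est_i}$ after seeing $t+1$ signed $\aux{0}{est_i}$ messages from distinct processes. Since at most $t$ processes are faulty and signatures are unforgeable, at least one of these signatures is that of a non-faulty process $p_j$, and by line~\ref{BBC-02} the only round-$0$ {\sc aux} message $p_j$ ever signs carries its own proposal $v_j$. Hence $est_i = v_j$ is a value proposed by a non-faulty process.

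For the inductive step, assume the claim holds for every round $r'$ with $1 \le r' < r$, and consider a non-faulty $p_i$ that signs and broadcasts $\auxproof{\signedm{i}{\aux{r}{est_i}},proofs}$ on line~\ref{BBC-11}. By line~\ref{BBC-10} the set $proofs$ satisfies $\isvalid(r, est_i, proofs)$, so one of the two return-$\ttrue$ branches of Figure~\ref{algo-is-safe} applies. In the first branch (line~\ref{IV-06}), $prev\_r = 0$ and $proofs$ contains $t+1$ signed $\aux{0}{est_i}$ messages from distinct processes; the base-case argument then applies verbatim to produce a non-faulty proposer of $est_i$. In the second branch (line~\ref{IV-07}), $proofs$ contains $n-t$ signed $\aux{prev\_r}{est_i}$ messages from distinct processes; since $t<n/3$ we have $n-t \ge 2t+1$, so at least $t+1$ of those signatures belong to non-faulty processes, and by the induction hypothesis any non-faulty signer of $\aux{prev\_r}{est_i}$ only did so because $est_i$ was proposed by a non-faulty process.

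The main obstacle I anticipate is the indirection through the $\isvalid$ predicate in the inductive step: I have to tie the hypothesis ``$p_i$ broadcasts'' to the hypothesis of the induction, which speaks about non-faulty signers of lower-round messages. This is made clean by the unforgeability of signatures (so a signed $\aux{r'}{est_i}$ from a non-faulty process really was produced via line~\ref{BBC-11} in round $r'$) together with the counting bound $n-t \ge 2t+1$, which guarantees at least one, and in fact $t+1$, non-faulty signatures inside any $(n-t)$-sized proof set. Once those two points are in hand, both branches close without further calculation.
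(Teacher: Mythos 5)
Your proof is correct and follows essentially the same route as the paper's: an induction on the round number using the fact that every satisfying branch of the $\isvalid$ predicate (lines~\ref{IV-03}, \ref{IV-06}, \ref{IV-07}) forces at least $t+1$ signatures from distinct processes, hence at least one non-faulty signer at a strictly smaller round, bottoming out at round~$0$ where non-faulty processes sign only their own proposals. The only detail worth adding is that line~\ref{IV-07} can also fire with $prev\_r = 0$, in which case your induction hypothesis (stated for rounds $r' \ge 1$) does not literally apply, but the base-case argument for round-$0$ messages closes that subcase immediately.
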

\begin{proof}
  By Lemma~\ref{lem:enough}, a process will only broadcast messages that satisfy the $\isvalid$ predicate.
  By lines~\ref{IV-03},~\ref{IV-06},~\ref{IV-07} of the $\isvalid$ predicate, in any round $r > 0$
  a binary value will only satisfy the predicate if the process has received at least $t+1$ signed {\sc aux} messages
  from different processes.
  Given that there are at most $t$ faults and by induction, non-faulty processes will only broadcast values proposed by non-faulty processes.
\end{proof}

The idea of the next lemma is to show that if a binary is never valid in a round $r$, then it will never be valid
in any round after $r$.

\begin{lemma}
  \label{lem:valid}
  If a round $r_f > 0$, a binary value $b\_val$, and any set $proofs$ of signed {\sc aux} messages never satisfy
  the $\isvalid$ predicate, then then predicate will never be satisfied for $b\_val$ and any
  round $r_n > r_f$.
\end{lemma}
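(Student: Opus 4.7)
The plan is to proceed by strong induction on $r_n > r_f$, resting on a single algorithmic observation: whenever $\isvalid(r, b\_val, \cdot)$ is never satisfied for some round $r > 0$, no non-faulty process ever signs and broadcasts an $\aux{r}{b\_val}$ message. The reason is that on line~\ref{BBC-11} the $\proofs$ accompanying the broadcast is precisely the set computed on line~\ref{BBC-10} to satisfy $\isvalid$ for $(est_i, r_i)$; combined with Lemma~\ref{lem:enough}, a non-faulty process having $est_i = b\_val$ in round $r_i$ would actually witness $\isvalid(r_i, b\_val, \cdot)$. Consequently at most the $t$ faulty processes can contribute signed $\aux{r}{b\_val}$ messages, and since $n - t > 2t$ when $t < n/3$, fewer than $n-t$ such messages from distinct processes can ever appear in the system.

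For the inductive step I would suppose towards contradiction that $\isvalid(r_n, b\_val, \proofs) = \ttrue$ for some $\proofs$, and trace through Figure~\ref{algo-is-safe} to see which branch succeeded. Since $r_n \geq r_f + 1 \geq 2$, the success must come at line~\ref{IV-06} or line~\ref{IV-07}; let $prev\_r$ be the value computed at line~\ref{IV-05} during this evaluation. I would then split into three cases: (i) $prev\_r = 0$; (ii) $0 < prev\_r < r_f$; and (iii) $r_f \leq prev\_r < r_n$. In cases (i) and (ii) the plan is to argue that the very same $\proofs$ would also satisfy $\isvalid$ at round $r_f$ — the $prev\_r$ recomputed for round $r_f$ agrees with the one at $r_n$ because, by maximality at $r_n$, no round strictly between $prev\_r$ and $r_n$ has coin output $\neg b\_val$, and in particular none do below $r_f$ either. (When $r_f = 1$ in case~(i), the success simply migrates to line~\ref{IV-03} instead of line~\ref{IV-06}, using the same $t+1$ signed round-$0$ messages.) Either way we contradict the hypothesis on $r_f$. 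Case~(iii) is where the induction actually does work: line~\ref{IV-07} at $r_n$ requires $n-t$ signed $\aux{prev\_r}{b\_val}$ messages from distinct processes, but the inductive hypothesis applied to $prev\_r \in [r_f, r_n)$, combined with the algorithmic observation, implies that fewer than $n-t$ such messages exist anywhere in the system.

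The main obstacle will be pinning down the algorithmic observation rigorously, since it is a global invariant about what non-faulty processes can ever produce rather than something that follows by a one-line inspection of the code; in particular, I have to argue that the estimate $est_i$ of a non-faulty process in round $r_i$ cannot equal $b\_val$ whenever $\isvalid(r_i, b\_val, \cdot)$ is unsatisfiable, and this leans on Lemma~\ref{lem:enough}. The remaining effort is careful bookkeeping on $prev\_r$ across the two evaluations at $r_f$ and $r_n$ to ensure the case split is exhaustive and that every satisfying branch at $r_n$ either migrates the failure down to $r_f$ or triggers the inductive hypothesis.
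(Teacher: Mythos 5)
Your proposal is correct and follows essentially the same route as the paper's own argument: unsatisfiability at $r_f$ persists verbatim while $prev\_r$ (line~\ref{IV-05}) stays below $r_f$, and once $prev\_r$ lands in $[r_f, r_n)$ the required $n-t$ signed $\aux{prev\_r}{b\_val}$ messages cannot exist because no non-faulty process could have produced valid proofs to broadcast $b\_val$ in that round, leaving at most $t$ signers. Your version is simply a more carefully structured rendering (explicit strong induction and the three-way case split on $prev\_r$) of the paper's informal sketch, with the same key observation that a non-faulty process only broadcasts values for which line~\ref{BBC-10} can produce proofs satisfying $\isvalid$.
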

Given the construction of the $\isvalid$ predicate, for rounds following $r_f$ the predicate will output the same result as it
would for round $r_f$ until one round after the coin flip outputs value $\neg b\_val$
(i.e. until the value for $prev\_r$ is computed as a new value on line~\ref{IV-05}).

Now let $r_n$ be the first round after $r_f$ where the value of the coin is $b\_val$, (i.e.
$prev\_r$ is computed to be $r_n-1$).
Given that $b\_val$ is not valid in rounds $r_f,\ldots,r_n-1$, by lines~\ref{BBC-13}-\ref{BBC-15} no non-faulty process will
set $b\_val$ as its estimate and will not broadcast an {\sc aux} message containing $b\_val$ in these rounds.

As a result no process will receive more than $t$ signed {\sc aux} messages containing $b\_val$ in these rounds
and none of the lines of the $\isvalid$ predicate will be satisfied as they require at least $t+1$ messages.
By induction the same argument holds true for all following rounds.

\begin{proof}
  
\end{proof}

\begin{lemma}
  \label{lem:dec}
All non-faulty processes decide the same value.
\end{lemma}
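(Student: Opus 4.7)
The plan is to take an arbitrary non-faulty process $p_i$ that decides a value $v$ in some round $r$ (picking the earliest such decision across all non-faulty processes), and to show two things in sequence: (i) every non-faulty process enters round $r+1$ holding the estimate $v$, and (ii) $\neg v$ is permanently invalid from round $r+1$ onward. Once both are in hand, the conclusion is immediate, because no non-faulty process will ever be able to gather the $n-t$ valid $\aux{r'}{\neg v}$ messages required by lines~\ref{BBC-20}--\ref{BBC-21} to decide $\neg v$.

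For step (i), I observe that $p_i$'s decision on line~\ref{BBC-21} requires $n-t$ valid $\aux{r}{v}$ messages together with $\coinmap_i[r]=v$. By Lemma~\ref{lem:maj}, the binary value $\neg v$ cannot also muster $n-t$ signed round-$r$ {\sc aux} messages, so any non-faulty process reaching lines~\ref{BBC-13}--\ref{BBC-16} in round $r$ either sets its estimate to $v$ or to $\coinval$; by the c-global property of the coin, those processes in the latter case set their estimate to $v$ on line~\ref{BBC-22}. Hence all non-faulty processes begin round $r+1$ with estimate $v$.

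For step (ii), I run the $\isvalid$ predicate with input $(r+1, \neg v, \cdot)$. The quantity $prev\_r$ computed on line~\ref{IV-05} is the largest round below $r+1$ whose coin equals $v$, which is exactly $r \geq 1$; this rules out line~\ref{IV-06}. The only remaining clause is line~\ref{IV-07}, which requires $n-t$ signed $\aux{r}{\neg v}$ messages and is blocked once more by Lemma~\ref{lem:maj}. Thus $\neg v$ fails $\isvalid$ in round $r+1$, and Lemma~\ref{lem:valid} then propagates this failure to every round $r'>r$.

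Finally, Lemma~\ref{lem:propose} ensures that in every round $r'>r$ a non-faulty process broadcasts $\neg v$ only if it satisfies $\isvalid$, which by step (ii) is impossible; so at most $t$ signed $\aux{r'}{\neg v}$ messages can ever exist, well short of the $n-t$ threshold. Consequently no non-faulty process can decide $\neg v$ in any round $r'>r$, and Lemma~\ref{lem:maj} already rules out a simultaneous decision of $\neg v$ in round $r$ itself. The main obstacle is step (ii): one must carefully confirm that both clauses of $\isvalid$ for $r>1$ are closed off and that the induction of Lemma~\ref{lem:valid} is anchored at the correct reference round $r$ rather than some later round; once this is nailed down, the rest is bookkeeping.
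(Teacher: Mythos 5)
Your proof is correct and follows essentially the same argument as the paper: Lemma~\ref{lem:maj} blocks the existence of $n-t$ signed $\aux{r}{\neg v}$ messages, the coin value $v$ in round $r$ forces $prev\_r = r$ in the $\isvalid$ predicate so that neither line~\ref{IV-06} nor line~\ref{IV-07} can be satisfied for $\neg v$, and Lemma~\ref{lem:valid} propagates that invalidity to all later rounds. The only differences are organizational --- by fixing the earliest decision round you fold the paper's separate case $r_y < r_x$ (handled there by a symmetric contradiction) into one argument, and your step (i) is not strictly needed; also, the fact that non-faulty processes only broadcast values satisfying $\isvalid$ comes from line~\ref{BBC-10} and Lemma~\ref{lem:enough} rather than Lemma~\ref{lem:propose}.
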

\begin{proof}
  Assume a non-faulty process decides a binary value $b\_val$ in round $r_x$.
  By line~\ref{BBC-11} the process must have received $n-t$ signed $\aux{r_x}{b\_val}$ messages from different processes
  and the output of the coin for round $r_x$ must have been $b\_val$.
  Also by line~\ref{BBC-11} for this or a different non-faulty process to decide $\neg b\_val$, the process must
  receive $n-t$ signed $\aux{r_y}{\neg b\_val}$ messages from different processes in some round $r_y$.
  Furthermore by lines~\ref{BBC-21} and the \emph{c-global} property of the coin we have $r_y \neq r_x$.

  First assume $r_y > r_x$.
  By Lemma~\ref{lem:maj}, no process will receive $n-t$ signed $\aux{r_x}{\neg b\_val}$ messages from different processes and
  by Lemma~\label{lem:enough} and line~\ref{BBC-10} of Figure~\ref{algo-BBC},
  a non-faulty process will only sign and broadcast a value that satisfies the $\isvalid$ predicate.

  Given that there are less than $n-t$ signed $\aux{r_x}{\neg b\_val}$ messages from different processes,
  $\neg b\_val$ will never be valid in round $r_x$ (lines~\ref{IV-05}, \ref{IV-08} of the $\isvalid$ predicate)
  and by Lemma~\ref{lem:valid}, will not be valid in any following round.
  Thus, by line~\ref{BBC-20} $\neg b\_val$ will not be decided in any round after $r_x$
  (note that the case on line~\ref{IV-06} of the $\isvalid$ does not apply here because the value for $prev\_r$ computed for
  $\neg b\_val$ will always be at least $r_x$).

  Next assume $r_y < r_x$.
  If a process receives $n-t$ signed $\aux{r_y}{\neg b\_val}$ from different processes and decides ${\neg b\_val}$ in round $r_y$
  then using the same argument as above, no non-faulty process will receive $n-t$ signed $\aux{r_x}{b\_val}$ in any following round
  and will not decide $b\_val$.
  Thus by contradiction no process will decide ${\neg b\_val}$ in a round prior to $r_x$.
\end{proof}

\begin{lemma}
  \label{lem:nxtdec}
  Let $r_f$ be the smallest round in which a non-faulty process decides and the value of the coin in this round
  be $b\_val$. All non-faulty processes will decide in either
  round $r_f$ or the first round $r_n > r_f$ where the value the value of the coin in $r_n$ is $b\_val$.
\end{lemma}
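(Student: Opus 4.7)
The plan is a two-stage argument: first, show that no non-faulty process can ever broadcast or accept an {\sc aux} message carrying $\neg b\_val$ in any round strictly greater than $r_f$; second, conclude that every non-faulty process consequently enters each round $r > r_f$ with $est = b\_val$ and therefore decides in the first round $r \geq r_f$ for which $\coinmap[r] = b\_val$.

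For the first stage, I would start from the fact that the non-faulty process that decided in round $r_f$ did so via line~\ref{BBC-21} and therefore collected $n-t$ signed $\aux{r_f}{b\_val}$ messages, so Lemma~\ref{lem:maj} rules out the existence of $n-t$ signed $\aux{r_f}{\neg b\_val}$ messages from different processes. Evaluating $\isvalid(r_f+1, \neg b\_val, \cdot)$: line~\ref{IV-05} computes $prev\_r = r_f$ (since $\coinmap[r_f] = b\_val$), line~\ref{IV-06} does not apply because $prev\_r \neq 0$, and line~\ref{IV-07} cannot be satisfied for lack of the required $n-t$ $\aux{r_f}{\neg b\_val}$ messages. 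Hence $\neg b\_val$ is not valid in round $r_f+1$ for any $proofs$, and Lemma~\ref{lem:valid} extends this invalidity to every subsequent round.

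For the second stage, I first handle round $r_f$ itself: by Lemma~\ref{lem:maj}, line~\ref{BBC-14} cannot fire with value $\neg b\_val$, so each non-faulty process either sets $est = b\_val$ directly or sets $est = \coinval$ on line~\ref{BBC-15} and updates it to $\coinmap[r_f] = b\_val$ on line~\ref{BBC-22}. An induction on $r \geq r_f+1$ then closes the argument: the inductive hypothesis guarantees that every non-faulty process broadcasts $\aux{r}{b\_val}$ on line~\ref{BBC-11}, while the first stage guarantees that any $\aux{r}{\neg b\_val}$ message is discarded at line~\ref{BBC-24}. Consequently, the $n-t$ valid round-$r$ {\sc aux} messages eventually collected by each non-faulty process are all for $b\_val$; line~\ref{BBC-14} sets $est \leftarrow b\_val$, and line~\ref{BBC-21} fires precisely when $\coinmap[r] = b\_val$.

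The main obstacle will be making the intermediate-round bookkeeping airtight: in particular, verifying that $prev\_r$ computed on line~\ref{IV-05} for value $\neg b\_val$ continues to point at $r_f$ throughout the interval $(r_f, r_n)$ so that no intervening coin flip of $\neg b\_val$ can "rehabilitate" $\neg b\_val$ as a valid value, and confirming that Lemma~\ref{lem:valid} really applies across this entire interval. I also need to observe that processes which already decided in round $r_f$ continue to execute the loop and broadcast normally, so they contribute to, rather than disrupt, the quorum arguments. Once these invariants are in place, the decision in round $r_n$ follows at once: the $n-t$ non-faulty broadcasts of $\aux{r_n}{b\_val}$ arrive, and $\coinmap[r_n] = b\_val$ by the definition of $r_n$.
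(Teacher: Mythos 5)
Your proposal is correct and follows essentially the same route as the paper's proof: use Lemma~\ref{lem:maj} to rule out $n-t$ signed $\aux{r_f}{\neg b\_val}$ messages, conclude via the structure of $\isvalid$ and Lemma~\ref{lem:valid} that $\neg b\_val$ is never valid after $r_f$, deduce that all non-faulty processes carry $b\_val$ thereafter, and hence decide in the first round whose coin equals $b\_val$. Your version is in fact slightly more careful than the paper's (evaluating $\isvalid$ at round $r_f+1$ rather than $r_f$, and explicitly handling the estimate update at the end of round $r_f$ via lines~\ref{BBC-14}, \ref{BBC-15}, and~\ref{BBC-22}), but the decomposition and key lemmas are the same.
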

\begin{proof}
  Given line~\ref{BBC-21} of Figure~\ref{algo-BBC}, a non-faulty process decides $b\_val$ in round $r_f$ after receiving
  $n-t$ signed $\aux{r_f}{b\_val}$ from different processes.
  By Lemma~\ref{lem:maj} no process will receive $n-t$ signed $\aux{r}{\neg v}$ messages from different processes,
  and by lines~\ref{BBC-05}-\ref{BBC-07} of the $\isvalid$ predicate, ${\neg b\_val}$ will never be valid in round
  $r_f$.
  Furthermore, given Lemma~\ref{lem:valid}, ${\neg b\_val}$ will not
  be valid in round any round after $r_f$.
  From this and by Lemma~\ref{lem:enough}, in all rounds after $r_f$ all non-faulty processes will broadcast
  messages containing the binary $b\_val$.
  Thus by line~\ref{BBC-20} of Figure~\ref{algo-BBC} in the first round $r_n$ after $r_f$ where the value of the coin
  is $b\_val$, all non-faulty processes will wait until they receive $n-t$
  signed $\aux{r_n}{b\_val}$ messages from different processes, and decide on line~\ref{BBC-12}.
\end{proof}

\begin{lemma}
  \label{lem:term}
  The algorithm terminates in expected $O(1)$ rounds.
\end{lemma}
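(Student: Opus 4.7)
The plan is to split the termination bound into two pieces: (i) the expected number of rounds until the first non-faulty process decides, and (ii) the expected number of additional rounds until every non-faulty process has decided. Part (ii) is handled directly by Lemma~\ref{lem:nxtdec}: once the first non-faulty decision occurs in round $r_f$ with coin value $b\_val$, all non-faulty decide by the first subsequent round $r_n$ whose coin is again $b\_val$, and by property c-random each later coin independently equals $b\_val$ with probability at least $1/2$, so $r_n - r_f$ is at most $2$ in expectation.

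For part (i), I would argue that in any round $r$ in which no non-faulty has yet decided, there is probability at least $1/2$ that a first decision occurs within $\{r, r+1\}$, and then invoke a standard geometric-tail argument using the independence of coins across rounds. The case analysis is driven by Lemma~\ref{lem:maj}: at most one binary value $b$ can collect $n-t$ signed $\aux{r}{b}$ messages from distinct signers. In \emph{Case A}, such a $b$ exists; then some non-faulty process receives the $n-t$ aux of $b$, sets $est_i = b$ on line~\ref{BBC-14}, and, if the coin output $c_r$ equals $b$ (probability $\geq 1/2$ by c-random), decides on lines~\ref{BBC-20}--\ref{BBC-21}. In \emph{Case B}, no value collects $n-t$ aux messages, so every non-faulty executes line~\ref{BBC-15} and then line~\ref{BBC-22}, entering round $r+1$ with the common estimate $c_r$ and broadcasting $\aux{r+1}{c_r}$ (validity being guaranteed by Lemma~\ref{lem:enough}); the $n-t$ non-faulty signatures alone put round $r+1$ into Case A with $b = c_r$, producing a first decision in round $r+1$ with probability $\geq 1/2$. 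Combining part (i) with Lemma~\ref{lem:nxtdec} gives expected $O(1)$ rounds of termination.

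The main technical obstacle is the subcase of Case A with $c_r \neq b$: the non-faulty estimates split, some retaining $b$ via line~\ref{BBC-14} and others adopting $\neg b$ via the $\coinval$ branch on line~\ref{BBC-22}, so round $r+1$ is neither cleanly Case A nor Case B and a naive per-round analysis threatens to break. The key observation that salvages the argument is that Lemma~\ref{lem:maj} still applies to round $r+1$, so the same Case A/Case B dichotomy repeats there with a fresh independent coin by c-random; hence the $\geq 1/2$ conditional bound on a first decision within the next two rounds continues to hold from every non-terminated state, which is exactly what the geometric-tail argument requires.
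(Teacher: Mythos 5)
Your overall architecture matches the paper's: a per-round dichotomy giving a decision-enabling event with probability at least $1/2$, a geometric-tail bound, and Lemma~\ref{lem:nxtdec} to get all non-faulty processes to decide within another expected two rounds. However, there is a genuine gap in part (i): your probability-$1/2$ claims require that the event defining Case~A versus Case~B is determined \emph{before} the coin for round $r$ is revealed, and nothing in your argument establishes this. The adversary controls message scheduling, and the coin for round $r$ becomes known as soon as $n-t$ processes have broadcast $\coin{r}$; at that moment many non-faulty processes may still be blocked at line~\ref{BBC-12}. For those processes the adversary can first observe the coin output $c_r$ and only then choose which $n-t$ valid $\aux{r}{}$ messages to deliver, steering them away from $n-t$ matching copies of $c_r$. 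Consequently the event ``some non-faulty process sets $est_i=b$ on line~\ref{BBC-14}'' is not independent of $c_r$, and \emph{c-random} alone does not give you the conditional $1/2$ bound. A secondary version of the same problem: your Case~A is defined by the mere \emph{existence} of $n-t$ signed $\aux{r}{b}$ messages, which does not imply that any non-faulty process actually collects all of them at line~\ref{BBC-12}, so even the deterministic step ``some non-faulty process receives the $n-t$ aux of $b$'' is unjustified.

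The paper closes this gap using the \emph{c-threshold} property, which you never invoke. Since revealing the coin requires $n-t$ participants, at least $t+1$ \emph{non-faulty} processes must have reached line~\ref{BBC-17} --- and hence fixed their round-$r$ estimates on lines~\ref{BBC-13}--\ref{BBC-15} --- strictly before the coin output exists. The case split is then made on the state of exactly these $t+1$ processes, which is measurable before the flip, so \emph{c-random} legitimately yields the $1/2$ bound: either one of them already holds a binary estimate $b$ backed by $n-t$ matching messages (and decides at line~\ref{BBC-20} if $c_r=b$), or all $t+1$ hold $\coinval$ and seed round $r+1$ with $c_r$, forcing every $(n-t)$-subset of round-$(r+1)$ messages to contain a $c_r$ message. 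Your closing observation about the ``split estimates'' subcase is fine as far as it goes, but the repetition of the dichotomy with a fresh coin is only valid once each round's dichotomy is anchored to a pre-flip state in this way; you should restate your cases in terms of the first $t+1$ non-faulty coin participants.
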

\begin{proof}
  Given the \emph{c-threshold} property of the coin and that $t < n/3$, the value of the coin will not be revealed
  in a round $r > 0$ until at least $t+1$ non-faulty processes have participated in computing the output of the coin,
  i.e. $t+1$ non-faulty processes have reached line~\ref{BBC-17} of Figure~\ref{algo-BBC}.
  Consider the following two possible cases at the point where the $t+1$th non-faulty process reaches this line,
  just before the value of the coin for round $r$ is revealed.

  \begin{itemize}
  \item First assume that at least one of the $t+1$ non-faulty process has received $n-t$ valid $\aux{r}{b\_val}$
    messages for a single binary value $b\_val$ on line~\ref{BBC-12} and set its estimate to $b\_val$
    on line~\ref{BBC-14}.
    Now given \emph{c-random}, the output of the coin for round $r$ will be $b\_val$ with probability $1/2$ and
    the process will decide if it has not already done so on line~\ref{BBC-21}.

  \item Otherwise all $t+1$ non-faulty processes that have reached line~\ref{BBC-17} did not receive
    $n-t$ valid {\sc aux} for a single binary value, and as a result set their estimate to
    $\coinval$ on line~\ref{BBC-15}.
    Let the result of the coin for round $r$ be some binary value $b\_val$.
    In round $r+1$ these $t+1$ processes will have $b\_val$ as their estimate and broadcast the message $\aux{r+1}{b\_val}$.
    Given $t < n/3$, any set of $n-t$ valid {\sc aux} messages from round $r+1$ will contain
    at least one $\aux{r+1}{b\_val}$ message.
    Now given \emph{c-random}, the output for the coin in round $r+1$ will be the same binary value $b\_val$
    with probability $1/2$.
    In this case, all non-faulty processes will set their estimate to $b\_val$ in this round given that they received
    at least one $\aux{r+1}{b\_val}$ message (line~\ref{BBC-15} of Figure~\ref{algo-BBC}).
    Now given that any set of $n-t$ valid messages for round $r+1$ contains at least one $\aux{r+1}{b\_val}$ message,
    $\neg b\_val$ will never satisfy the $\isvalid$ predicate for round $r+2$ (lines~\ref{IV-05}-\ref{IV-07}) and
    given Lemma~\ref{lem:valid}, ${\neg b\_val}$ will not be valid in round any following round.
    From this and by Lemma~\ref{lem:enough}, in all rounds after $r+1$ all non-faulty processes will broadcast {\sc aux}
    messages containing the binary $b\_val$.
    Thus by line~\ref{BBC-20} of Figure~\ref{algo-BBC} in the first round $r_n$ after $r+1$ where the value of the coin
    is $b\_val$, all non-faulty processes will wait until they receive $n-t$
    signed $\aux{r_n}{b\_val}$ messages from different processes, and decide on line~\ref{BBC-12}
    if not already done.
  \end{itemize}

  Thus, in any round a non-faulty process will reach a state where termination is ensured
  with probability of at least $1/2$.
  Termination is then ensured with probailibty $1 - \prod_{r=1}^{\infty} 1/2 = 1$.
  Furthermore, the expected number of rounds to reach a state from which termination is ensured is
  $\sum_{r=1}^{\infty}r\frac{1}{2^n} = 2$, and by Lemma~\ref{lem:nxtdec} all processes will decide by
  the next round where the coin flip results in the same value, i.e. another expected $2$ rounds.

\end{proof}

\begin{theorem}
  The algorithm presented in Figure~\ref{algo-BBC} solves the Binary consensus problem in ${\BAMP}[t<n/3]$.
\end{theorem}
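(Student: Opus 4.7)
The plan is to prove the theorem by simply assembling the three required BBC properties (Termination, Agreement, Validity) from the lemmas already established. Each follows either directly or by a one-line additional argument.

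For BBC-Agreement, I would cite Lemma~\ref{lem:dec} directly: it states that all non-faulty processes decide the same value, which is precisely the agreement property.

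For BBC-Termination, I would combine Lemma~\ref{lem:term} with Lemma~\ref{lem:nxtdec}. Lemma~\ref{lem:term} shows that in every round some ``good event'' occurs with probability at least $1/2$, so the expected number of rounds before termination becomes inevitable is $O(1)$, and termination occurs with probability $1$; Lemma~\ref{lem:nxtdec} then ensures that once any non-faulty process decides, all remaining non-faulty processes decide by the next round in which the coin matches the decided value, giving termination for every non-faulty process.

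For BBC-Validity, I would argue from Lemma~\ref{lem:propose}. Suppose all non-faulty processes propose the same value $v$. Then by Lemma~\ref{lem:propose} no non-faulty process ever signs or broadcasts an {\sc aux} message with binary value $\neg v$ in any round $r > 0$. Since at most $t$ processes are faulty, at most $t$ signed $\aux{r}{\neg v}$ messages from distinct processes can ever exist for any $r > 0$. Decision occurs only on line~\ref{BBC-21} of Figure~\ref{algo-BBC}, which requires $n-t$ valid $\aux{r_i}{\coinmap_i[r_i]}$ messages from distinct processes. Since $t < n/3$ implies $n-t > 2t \geq t+1 > t$, no process can ever accumulate $n-t$ signed $\aux{r}{\neg v}$ messages, so $\neg v$ cannot be the decided value in any round. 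Hence only $v$ can be decided.

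The main obstacle, if any, is the Validity argument, because it is the only property that does not appear verbatim as a prior lemma and requires one to trace through the decision rule and bound Byzantine contributions. However, the counting argument based on $t < n/3$ is routine once Lemma~\ref{lem:propose} is invoked, so I expect the overall proof to be short and essentially mechanical.
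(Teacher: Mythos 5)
Your proposal is correct and follows essentially the same route as the paper, which simply attributes Termination to Lemma~\ref{lem:term}, Agreement to Lemma~\ref{lem:dec}, and Validity to Lemma~\ref{lem:propose}; your extra counting argument for Validity just makes explicit what the paper leaves implicit.
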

\begin{proof}
  First recall the definition of Binary Byzantine Consensus.
  \begin{itemize}
  \item BBC-Termination. Every non-faulty process eventually decides on a value.
  \item BBC-Agreement.   No two non-faulty processes decide on different values.
  \item BBC-Validity.  If all non-faulty processes propose the same value, no
    other value can be decided.
  \end{itemize}
  BBC-Termination is ensured by Lemma~\ref{lem:term}.
BBC-Agreement and BBC-Validity are ensured by Lemmas~\ref{lem:dec} and~\ref{lem:propose} respectively.
\end{proof}

\section{Implementation and experiments.}

\paragraph{Stopping and garbage collection.}
The algorithm shown in Figure~\ref{algo-BBC} continues to execute rounds forever.
To avoid this, if a non-faulty process decides in round $r$ it can simply broadcast a ``proof'' of decision, containing the $n-t$
messages that allowed it to decide and stop immediately.
Furthermore, the broadcast of this message may be delayed until the process receives a valid message from another process
from round $r+1$, ensuring that if all processes decide in round $r$ then no extra messages will be sent.
Note that in implementation, a process can not be immediately garbage collected as it needs
to ensure that its messages are reliably delivered
(reliable channels are often implemented through the use of re-transmissions when needed).
Fortunately, in a system that is executing multiple consensus instances, garbage collection of earlier
instances can be easily coordinated in later instances (this is not described here as it depends on
the requirements of the specific system).

\paragraph{Cryptographic signatures and validity proofs.}
Including proofs of validity with messages is necessary for the
correctness of the algorithm, but is not often needed in the expected case.
In fact in the presence of reliable channels the validity proofs are needed only in the
case of faults.
Given this, for efficiency an implementation may choose not to include proofs with messages by default
and instead have processes request proofs from the sender of the message if the recipient
cannot validate the message itself.
At worst, this slows down the execution of the algorithm as a non-faulty process may
have to wait to receive proofs from another non-faulty process.
Furthermore note that most proofs are a set of $n-t$ signatures of a single {\sc aux} message,
and given that the algorithm uses $n-t$ threshold signatures for the coin messages,
the same public keys can be used to sign proofs, reducing the proofs to a single signature in most cases.

\paragraph{Reducing the message steps}
In each round a non-faulty process broadcasts an {\sc aux} message,
waits until is receives $n-t$ valid {\sc aux} messages,
then broadcasts a {\sc coin} message, and waits
to receive $n-t$ valid {\sc coin} messages before continuing to the next round,
meaning each round includes the latency of at least $2$ message propagations.

For round $r > 0$ this can be reduced to the latency of a single message propagation by combining
the {\sc coin} message from round $r$ with the {\sc aux} message of round $r+1$
and broadcasting them together.
Notice that before the coin is broadcast in round $r$ on line~\ref{BBC-17}, the estimate that will be broadcast
for round $r+1$ has already been computed on lines~\ref{BBC-13}-\ref{BBC-15}, and the messages that will be used to generate
its proofs have already been received.
At this point only the value of the coin for round $r$ is unknown, so in the case that $est_i$
was set to $\coinval$ on line~\ref{BBC-15} of Figure~\ref{algo-BBC} the node will broadcast an
{\sc aux} message containing $\coinval$ instead of a binary value.
When nodes receive {\sc aux} message containing $\coinval$ they will simply wait until
they know the value of the coin for this round, then use these messages as if they
contained the same binary value as the coin.
Proofs for both binary values must be included with the message.
Note that given {\sc aux} messages can now hold $3$ different values, a proof of validity
for a round $r$ and binary value $b\_val$ may contain messages of the form
$\aux{r}{\coinval}$ and $\aux{r}{b\_val}$ and as a result the proofs may contain
$n-t$ signatures instead of a single threshold signature.

Notice that this modification obviously does not alter either BBC-Agreement or BBC-Validity
as the logic of the algorithm is unchanged.
Furthermore BBC-Termination remains valid as there still are $t+1$ non-faulty nodes
who have computed their estimate before the value of the coin is revealed
as needed by Lemma~\ref{lem:term}.

\subsection{Experiments}\label{sec:exp}
The algorithm has been implemented using the Go~\cite{GO} programming language.
Reliable channels are implemented through message re-transmission.
All messages contain the same predefined unique 32 byte string so that signatures
cannot be reused between different experiments.
All received messages are stored to disk in an append only log allowing processes
to recover quickly after a crash failure.
Threshold signatures use the
implementation of threshold BLS~\cite{BLS04} included in the Kyber library~\cite{kyber}.

Experiments were run on Google Could Compute using $10$, $20$, $40$, and $80$ n1-standard-1
instances (3.75 GiB of memory, 1 vCPU - a single hardware Hyper-thread, local SSD, 2 Gbps maximum egress bandwidth).
The instances were spread evenly across ten regions
in Asia, Australia, Europe, North America, and South America.

In each experiment nodes run 5 ``warm-up'' instances of binary consensus, followed by
50 additional binary consensus instances from which the results are calculated, with the graphs
showing the minimum, maximum and average values.
For each binary consensus instance nodes choose a random binary proposal.

Note that given the large number of random variables in the experiments we expect to see
a large amount of variance in the results. Ideally we would run consensus many more times
to get more stable results, but were unable to due to budget constraints.
Instead, in order to reduce the effect of randomness, the node's proposals and the outputs of the coin
flips are chosen using a seeded random generator that is reused for each experiment.
Note that the coin is still generated as described using threshold signatures, just the output is not used.

Furthermore given the low CPU power of the nodes and high computation cost of cryptography we
expect to see better performance on more capable machines, though again we were unable
to do this here due to budget constraints.

\begin{figure*}[ht!]
  \begin{subfigure}{.5\textwidth}
    \includegraphics[scale=0.60]{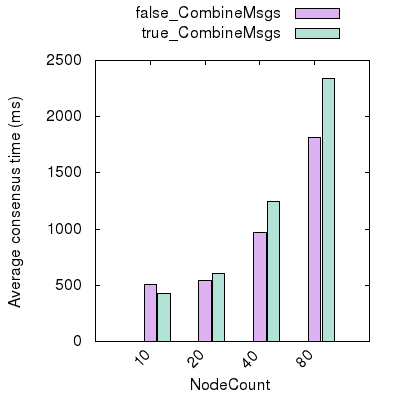}
    \caption{Latency}\label{fig:lat}
  \end{subfigure}
  \begin{subfigure}{.5\textwidth}
    \includegraphics[scale=0.60]{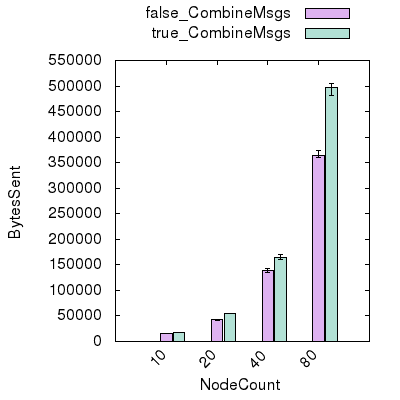}
    \caption{Bytes sent}\label{fig:bytes}
  \end{subfigure}
  \begin{subfigure}{.5\textwidth}
    \includegraphics[scale=0.6]{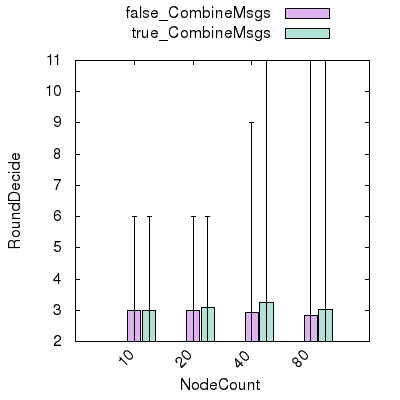}
    \caption{Decision round}\label{fig:round-dec}
  \end{subfigure}
  \begin{subfigure}{.5\textwidth}
    \includegraphics[scale=0.6]{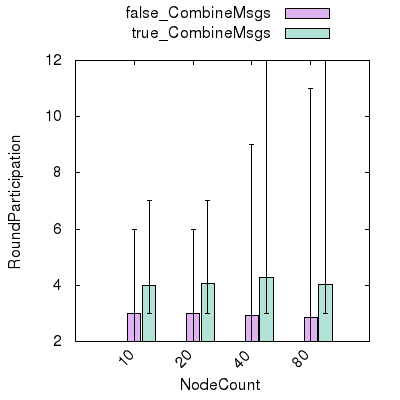}
    \caption{Participation round}\label{fig:round-term}
  \end{subfigure}  
  \caption{Experiment results with 10, 20, 40, and 80 single hardware thread nodes, where messages contain proofs of validity.}
  \label{fig:exp} 
\end{figure*}

Figure~\ref{fig:exp} shows the results of the experiment where \emph{false\_CombineMessages} are the results
of the standard algorithm and \emph{true\_CombineMessages} are the results with the optimization described previously combining
the {\sc coin} message with the {\sc aux} message of the following round.
Figure~\ref{fig:lat} shows the average latency of executing a single
consensus instance, Figure~\ref{fig:bytes} shows the average number
of bytes sent for a single consensus instance for all nodes,
Figure~\ref{fig:round-dec} shows the average, minimum, and maximum decision
round of the consensus instances, Figure~\ref{fig:round-term} shows the average, minimum, and maximum participation
(i.e. termination) round the consensus instances.

For \emph{false\_CombineMessages} and $10$ nodes we see average latencies around $500$ milliseconds
and \emph{true\_CombineMessages} being approximately $75$ milliseconds lower (Figure~\ref{fig:lat}).
As the number of nodes increases, the latency increases to over $2$ seconds, with \emph{true\_CombineMessages}
being slower than \emph{false\_CombineMessages}. This increase is largely created by the increase in computation needed to
validate signatures. To demonstrate this Figure~\ref{fig:exp-sleep} shows the latency results of the same
experiment, except where signature validations are replaced with sleeps of the estimated time to validate a signature,
where up to $4$ sleeps can be run concurrently (i.e. simulating a machine with $4$ processing cores).
Of course this is not completely realisitic as it does not simulate other operations that could slow down the execution
such as cache invlidations and garbage collection and is just for demonstration.
In this case the latency of \emph{true\_CombineMessages} remains lower than \emph{false\_CombineMessages}
by between $75$ to $100$ milliseconds and all averages stay below $600$ milliseconds.

\begin{figure*}[ht!]
\begin{center}
    \includegraphics[scale=0.60]{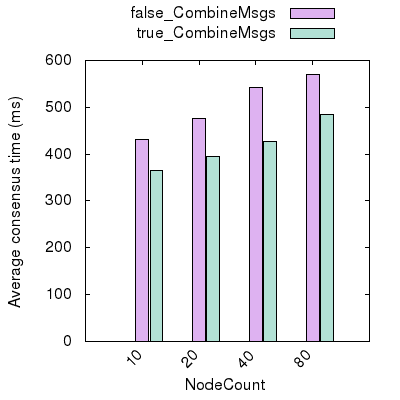}
  \end{center}
  \caption{Experiment results for 80 single hardware thread nodes, where up to 4 concurrent sleeps are
    performed instead signature validation to simulate a 4 core machine.}
  \label{fig:exp-sleep}
\end{figure*}

Concerning the number of rounds needed to decide, in all cases the average is approximately $3$ rounds (Figure~\ref{fig:round-dec})
with $2$ being the minimum. The maximum is $11$ rounds.
With \emph{false\_CombineMessages}, nodes terminate in the same round as they decide, while in the
case of \emph{true\_CombineMessages} nodes always participate in $1$ round following the round in which they decide (Figure~\ref{fig:round-term}).
This is simply because the coin message that results in the decision includes the message from the following round.
This in addition to the fact that the proofs of validity may be able to use threshold signatures as described previously
explains the increase in the number of bytes sent by \emph{true\_CombineMessages} (Figure~\ref{fig:bytes}),

\begin{figure*}[ht!]
  \begin{subfigure}{.5\textwidth}
    \includegraphics[scale=0.60]{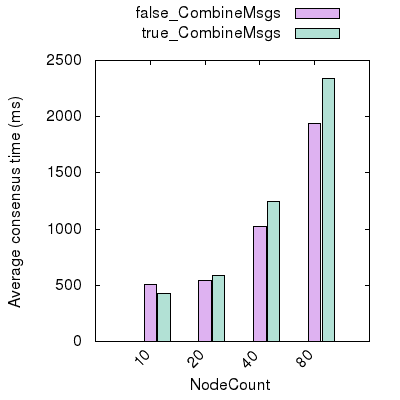}
    \caption{Latency}\label{fig:latnp}
  \end{subfigure}
  \begin{subfigure}{.5\textwidth}
    \includegraphics[scale=0.60]{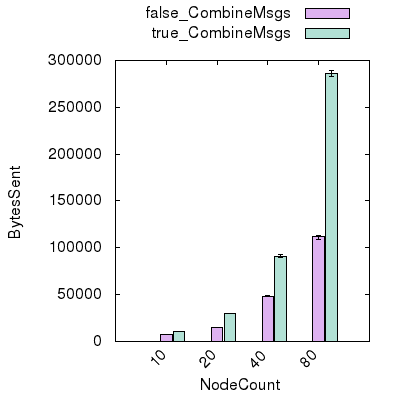}    
    \caption{Bytes sent}\label{fig:roundnp}
  \end{subfigure}
  \begin{subfigure}{.5\textwidth}
    \includegraphics[scale=0.6]{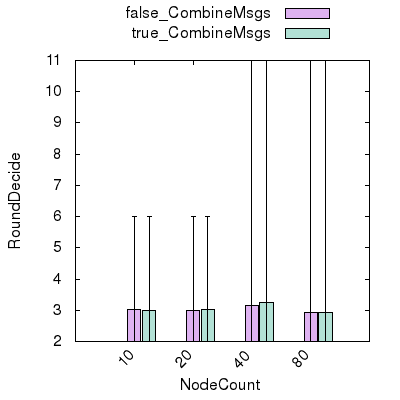}
    \caption{Decision round}\label{fig:round-decnp}
  \end{subfigure}
  \begin{subfigure}{.5\textwidth}
    \includegraphics[scale=0.6]{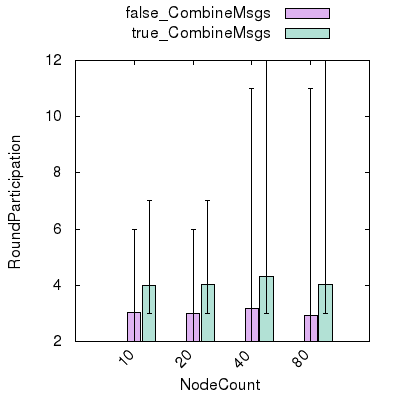}
    \caption{Participation round}\label{fig:roundtermnp}
  \end{subfigure}  
  \caption{Experiment results with 10, 20, 40, and 80 single hardware thread nodes, where messages do {\bf not} contain proofs of validity.}
  \label{fig:exp-noproofs} 
\end{figure*}

Figure~\ref{fig:exp-noproofs} shows the results of the same experiment as Figure~\ref{fig:exp}, except
here messages do not contain proofs of validity.
As mentioned previously, the proofs are only needed in the case of faults and can be sent on request
when needed by the receiver node.
Overall the results are fairly similar, with the main difference being
that the number of bytes sent is greatly reduced.
This is no surprise given that the consensus is over a binary value
and the main payload of the messages are the signatures.

\begin{figure*}[ht!]
  \begin{subfigure}{.5\textwidth}
    \includegraphics[scale=0.60]{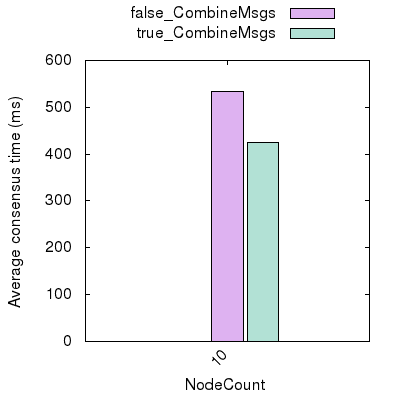}
    \caption{Latency}\label{fig:lat10}
  \end{subfigure}
  \begin{subfigure}{.5\textwidth}
    \includegraphics[scale=0.60]{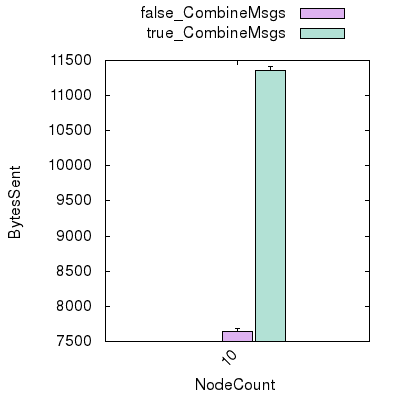}    
    \caption{Bytes sent}\label{fig:round10}
  \end{subfigure}
  \begin{subfigure}{.5\textwidth}
    \includegraphics[scale=0.6]{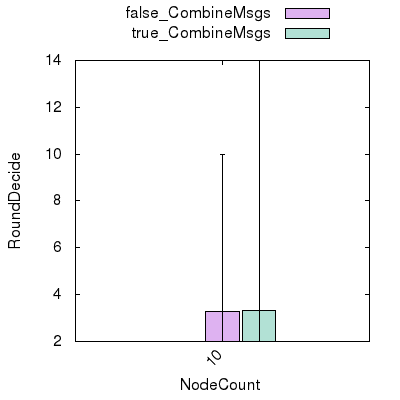}
    \caption{Decision round}\label{fig:round-dec10}
  \end{subfigure}
  \begin{subfigure}{.5\textwidth}
    \includegraphics[scale=0.6]{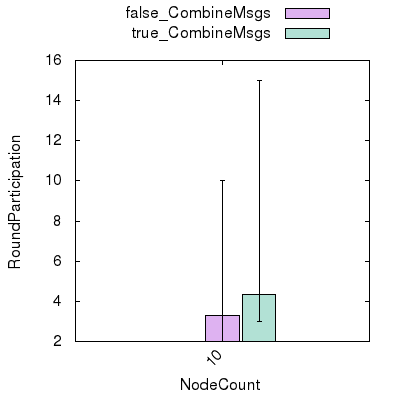}
    \caption{Participation round}\label{fig:round-term-10}
  \end{subfigure}  
  \caption{Experiment results for 1000 consensus instances with 10 nodes each with 4 hardware threads.}
  \label{fig:exp-10} 
\end{figure*}

Finally Figure~\ref{fig:exp-10} shows the results of an experiment with $10$ nodes
each with $4$ hardware threads, running $1000$ consensus instances and without including proofs of
validity.
Furthermore, the actual values from the coin flips are used.
The idea here is to perhaps have a more realistic experiment as more powerful nodes are used
and consensus is run many more times (only $10$ nodes are used due to budget constraints).
Here we see similar results as the previous experiments, except with somewhat
higher average latencies and higher maximum and average termination rounds,
which should be expected given the randomness of the experiments and the increased number of executions.

\section*{Acknowledgement}
Thanks to Daniel Collins for advice on cryptography.


\begin{thebibliography}{99}
\footnotesize{  

\bibitem{A03}
James Aspnes. Randomized protocols for asynchronous consensus.
{\it Distrib. Comput.}, 16(2-3):165-175, September 2003.

\bibitem{BO83}
Michael Ben-Or. Another advantage of free choice (extended abstract):
Completely asynchronous agreement protocols. {\it In Proceedings of the
Second Annual ACM Symposium on Principles of Distributed Computing}, PODC '83, pages 27-30, 1983

\bibitem{BSA14}
Alyson Bessani, Joao Sousa, and Eduardo E. P. Alchieri. State machine
replication for the masses with bft-smart. {\it In 2014 44th Annual IEEE/IFIP
International Conference on Dependable Systems and Networks}, pages
355-362, June 2014.

\bibitem{BLS04}
Dan Boneh, Ben Lynn, and Hovav Shacham. 2004. Short Signatures from the Weil Pairing.
{\it J. Cryptol. 17, 4} (September 2004), 297-319.

\bibitem{B87}
Gabriel Bracha. An o(log n) expected rounds randomized byzantine
generals protocol. {\it J. ACM}, 34(4):910-920, October 1987

\bibitem{BT83}
Gabriel Bracha and Sam Toueg. Asynchronous consensus and byzantine protocols in faulty environments. Technical Report TR83-559,
Cornell University, 1983.


\bibitem{B89}
C. Boyd. Public-key cryptography and re-usable shared secrets. {\it In H. Baker and F. Piper,
editors, Cryptography and Coding}, pages 241-246. Clarendon Press, 1989.

\bibitem{CT96}
Chandra T. and Toueg S., 
Unreliable failure detectors for reliable distributed systems. 
{\em Journal of the ACM}, 43(2):225-267 (1996)

\bibitem{CKS05}
Christian Cachin, Klaus Kursawe, and Victor Shoup. Random oracles
in constantinople: Practical asynchronous byzantine agreement using
cryptography. {\it Journal of Cryptology}, 18(3):219-246, 2005.

\bibitem{CR93}
Ran Canetti and Tal Rabin. Fast asynchronous byzantine agreement
with optimal resilience. In Proceedings of the Twenty-fifth Annual ACM
Symposium on Theory of Computing, {\it STOC '93}, pages 42-51, 1993.

\bibitem{CL02}
 Miguel Castro and Barbara Liskov. Practical byzantine fault tolerance
and proactive recovery. {\it ACM Trans. Comput}. Syst., 20(4):398-461,
November 2002.

\bibitem{C20}
Tyler Crain. A Simple and Efficient Binary Byzantine Consensus Algorithm
using Cryptography and Partial Synchrony. {\it arXiv preprint arXiv:2001.07867}, 2020.

\bibitem{CGLR18}
Tyler Crain, Vincent Gramoli, Mikel Larrea, and Michel Raynal.
Dbft: Efficient leaderless byzantine consensus and its applications to
blockchains. {\it In Proceedings of the 17th IEEE International Symposium
on Network Computing and Applications (NCA'18)}. IEEE, 2018.

\bibitem{CH89}
R. A. Croft and S. P. Harris. Public-key cryptography and re-usable shared secrets. {\it In
H. Baker and F. Piper, editors, Cryptography and Coding}, pages 189-201. Clarendon Press,
1989.

\bibitem{D88}
Y. Desmedt. Society and group oriented cryptography: A new concept. {\it In C. Pomerance,
editor, Advances in Cryptology: CRYPTO '87}, volume 293 of Lecture Notes in Computer
Science, pages 120-127. Springer, 1988.

\bibitem{DF90}
Y. Desmedt and Y. Frankel. Threshold cryptosystems. {\it In G. Brassard, editor, Advances
in Cryptology: CRYPTO '89}, volume 435 of Lecture Notes in Computer Science, pages
307-315. Springer, 1990.

\bibitem{DDS87}
Danny Dolev, Cynthia Dwork, and Larry Stockmeyer. On the minimal
synchronism needed for distributed consensus. {\it J. ACM}, 34(1):77-97,
January 1987.


\bibitem{DLS88}
Cynthia Dwork, Nancy A. Lynch, and Larry J. Stockmeyer. Consensus
in the presence of partial synchrony. {\it J. ACM}, 35(2):288-323, 1988.

\bibitem{FM97}
PESECH FELDMAN and SILVIO Micali. An optimal probabilistic
protocol for synchronous byzantine agreement. {\it SIAM J. Computing},
26(4):873-933, 1997.

\bibitem{FS86}
A. Fiat and A. Shamir.  How to prove yourself:  Practical solutions to identification andsignature  problems.
{\em Advances  in  Cryptology: CRYPTO  '86}, volume 263. Springer, 1987.

\bibitem{FL82}
Fischer M.J. and Lynch N.A.,
A lower bound for the time to assure interactive consistency.
{\em Information Processing Letters}, 14(4):183-186 (1982)


\bibitem{FLP85}
Fischer M.J., Lynch N.A.,  and Paterson M.S.,
Impossibility of distributed consensus with one faulty process.
{\em Journal of the ACM}, 32(2):374-382 (1985)


\bibitem{FP90}
Oded Goldreich and Erez Petrank. The best of both worlds: Guaranteeing termination in fast randomized byzantine agreement protocols.
{\it Inf. Process. Lett.}, 36(1):45-49, 1990.

\bibitem{GO}
  Go Programming Language. \url{https://golang.org/}.
  

\bibitem{KS16}
  Valerie King and Jared Saia. Byzantine agreement in expected polynomial time. {\it J. ACM}, 63(2):13, 2016.

\bibitem{kyber}
  Kyber library, Advanced crypto library for the Go language. \url{https://github.com/dedis/kyber}.

\bibitem{LSP82}
Leslie Lamport, Robert Shostak, and Marshall Pease. The byzantine
generals problem. {\it ACM Trans. Program. Lang. Syst.}, 4(3):382-401, July
1982.

\bibitem{LVCQV16}
Shengyun Liu, Paolo Viotti, Christian Cachin, Vivien Qu\'ema, and
Marko Vukolic. XFT: practical fault tolerance beyond crashes. {\it In 12th
USENIX Symposium on Operating Systems Design and Implementation,
OSDI 2016}, Savannah, GA, USA, November 2-4, 2016., pages 485-500,
2016.

\bibitem{MA06}
Jean-Philippe Martin and Lorenzo Alvisi. Fast byzantine consensus.
{\it IEEE Trans. Dependable Sec. Comput.}, 3(3):202-215, 2006.

\bibitem{MMR14}
Achour Most\'efaoui, Hamouma Moumen, and Michel Raynal.
Signature-free asynchronous byzantine consensus with $T < N /3$ and
$O(N^2)$ messages. {\it In Proceedings of the 2014 ACM Symposium on Principles of Distributed Computing, PODC '14},
pages 2-9, New York, NY,
USA, 2014. ACM.

\bibitem{MMR15}
Achour Mostéfaoui, Hamouma Moumen, and Michel Raynal.
Signature-Free Asynchronous Binary Byzantine Consensus with t < n/3, O(n2) Messages,
and O(1) Expected Time. {\it J. ACM 62, 4}. Article 31. 2015.


\bibitem{MR17}
Achour Most\'efaoui and Michel Raynal. Signature-free asynchronous byzantine systems: from multivalued to binary consensus with
$t < n/3$, $O(n^2)$ messages, and constant time. {\it Acta Informatica, 2017}.
Accepted: 19 April 2016

\bibitem{MRT00}
Achour Most\'efaoui, Michel Raynal, and Fr\'ed\'eric Tronel. From binary
consensus to multivalued consensus in asynchronous message-passing
systems. {\it Inf. Process. Lett., 73(5-6):207-212}, March 2000.

\bibitem{NCV05}
N. F. Neves, M. Correia, and P. Verissimo. Solving vector consensus
with a wormhole. {\it IEEE Trans. on Parallel and Distributed Systems},
16(2):1120-1131, 2005.

\bibitem{PCR14}
Arpita Patra, Ashish Choudhury, and C. Pandu Rangan. Asynchronous
byzantine agreement with optimal resilience. {\it Distributed Computing},
27(2):111-146, 2014.

\bibitem{PSL80}
M. Pease, R. Shostak, and L. Lamport. Reaching agreement in the
presence of faults. {\it J. ACM}, 27(2):228-234, April 1980

\bibitem{R83}
Michael O. Rabin. Randomized byzantine generals. {\it In Proceedings of
the 24th Annual Symposium on Foundations of Computer Science, SFCS
'83}, pages 403-409, 1983.

\bibitem{R98}
T. Rabin. A simplified approach to threshold and proactive RSA. {\it In H. Krawczyk, editor,
Advances in Cryptology: CRYPTO '98}, volume 1462 of Lecture Notes in Computer Science.
Springer, 1998

\bibitem{T84}
Sam Toueg. Randomized byzantine agreements. {\it In Proceedings of the
Third Annual ACM Symposium on Principles of Distributed Computing,
PODC '84}, pages 163-178, 1984.

\bibitem{TC84}
  Russell Turpin and Brian A. Coan. Extending binary byzantine agreement to multivalued byzantine agreement.
  {\it Inf. Process. Lett.}, 18(2):73-
76, 1984.

\bibitem{ZC09}
Jialin Zhang and Wei Chen. Bounded cost algorithms for multivalued
consensus using binary consensus instances. {\it Information Processing
Letters}, 109(17):1005-1009, 2009.


}
\end{thebibliography}
\end{document}